%% file: main.tex
\documentclass{ieeetj}
\usepackage[T1]{fontenc}
\usepackage{cite}
\usepackage{amsmath,amssymb,amsfonts}
\usepackage{graphicx,xcolor,booktabs,tabularx,array}
\usepackage{tikz}
\usetikzlibrary{arrows.meta,positioning,fit,backgrounds,calc}
\usepackage{float,etoolbox,microtype,balance}
\usepackage{hyperref}
\hypersetup{hidelinks,pdftitle={BR-MoE: Budgeted Semantic Release for Privacy-Preserving Traffic Analytics},pdfauthor={Abdolazim Rezaei}}
\AtBeginDocument{\definecolor{tmlcncolor}{cmyk}{0.93,0.59,0.15,0.02}\definecolor{NavyBlue}{RGB}{0,86,125}}
\definecolor{teal}{RGB}{0,108,121}
\definecolor{pale}{RGB}{236,245,248}
\definecolor{amber}{RGB}{157,89,0}
\newcommand{\method}{BR-MoE}
\newcommand{\illustrative}{\textbf{Illustrative values only. No experiments were run.}}
\newcolumntype{L}[1]{>{\raggedright\arraybackslash}p{#1}}
\newcolumntype{Y}{>{\raggedright\arraybackslash}X}
\floatstyle{ruled}
\newfloat{algorithm}{tbp}{loa}
\floatname{algorithm}{Algorithm}
\newtheorem{proposition}{Proposition}
\def\OJlogo{\small\sffamily RESEARCH MANUSCRIPT}
\def\seclogo{\small\sffamily BR-MoE / RESEARCH DRAFT}
\makeatletter
\let\@revised\@empty
\patchcmd{\@affil}{\textsuperscript{\the\tempcount}}{}{}{}
\patchcmd{\@maketitle}{\fi.\space\ifx\@editor}{\fi\space\ifx\@editor}{}{}
\def\ps@draftpaper{%
  \def\@oddhead{\small\sffamily\color{NavyBlue}Rezaei: Budgeted Semantic Release\hfill Research draft}%
  \let\@evenhead\@oddhead
  \let\@evenfoot\@oddfoot}
\let\ps@plain\ps@draftpaper
\let\ps@headings\ps@draftpaper
\makeatother
\begin{document}
\title{BR-MoE: Budgeted Semantic Release for Privacy-Preserving Traffic Analytics}
\author{Abdolazim Rezaei}
\affil{Department of Computer Science, Texas A\&M University--Corpus Christi, Corpus Christi, TX 78412, USA}
\corresp{Corresponding author: Abdolazim Rezaei (email: arezaei@islander.tamucc.edu).}
%\authornote{Research draft. The method and evaluation protocol are proposed. Performance values are illustrative and must be replaced with measured results before submission.}
\begin{abstract}
Traffic cameras support useful analysis, but their outputs can reveal private information. Replacing images with text reduces direct visual exposure. It does not place a formal bound on disclosure. The published reinforcement learning and mixture-of-experts framework refines scene descriptions through relevance, coverage, and brevity rewards. It does not account for repeated releases under a fixed privacy budget. This paper proposes budgeted semantic release as an extension of that framework. Four scene experts first process isolated video chunks. Their outputs are mapped to a fixed, bounded semantic schema. A reinforcement learning controller selects which expert summaries to release and how much privacy budget to spend. The controller uses public task requests and previous private releases. A separate privacy filter enforces the total budget. Calibrated noise is added before the summaries are converted to text. Under explicit chunk isolation and bounded-event assumptions, the complete release transcript satisfies an event-level differential privacy bound. The design also supports reuse of previous summaries without new disclosure. An evaluation protocol covers traffic utility, privacy attacks, release freshness, and computational cost. Illustrative tables show the intended reporting format. They do not establish empirical gains. The proposed contribution is a task-aware release interface with auditable privacy accounting for repeated traffic descriptions.
\end{abstract}
\begin{IEEEkeywords}
Differential privacy, intelligent transportation systems, mixture of experts, reinforcement learning, video analytics.
\end{IEEEkeywords}
\maketitle
\thispagestyle{draftpaper}
\input{sections/01_introduction}
\input{sections/02_related_work}
\input{sections/03_problem}
\input{sections/04_method}
\input{sections/05_evaluation}
\input{sections/06_results}
\input{sections/07_discussion}
\bibliographystyle{ieeetr}
\bibliography{references}
\end{document}

%% file: sections/01_introduction.tex
\section{Introduction}
\label{sec:intro}
\IEEEPARstart{T}{raffic} cameras support road monitoring, pedestrian analysis, and transport planning. Their images also contain faces, vehicle plates, and contextual details. A system can expose these details even when its main task does not require identity. Vision-language models can infer private attributes from indirect visual cues~\cite{tomekce2024private, rezaei_2026_23667820}. Privacy protection therefore needs to address the released information, not only visible faces.

The published RL-MoE method combines reinforcement learning (RL) with a mixture of experts (MoE)~\cite{rezaei2025rl}. Four experts describe traffic, pedestrians, road signs, and the environment. An image-conditioned network weights their outputs. An RL policy then refines the text. This architecture offers a useful basis for semantic scene analysis. However, its stated reward contains relevance, conciseness, and coverage terms. None directly defines a privacy budget. Its privacy evidence comes from attacks, and its limitations section explicitly identifies the lack of a formal guarantee.

Repeated descriptions create a second concern. Several individually coarse reports may expose a detailed event when combined. A privacy mechanism must therefore account for the full release transcript. Training a model with differential privacy (DP) does not solve this issue by itself. Training privacy protects contributions to the learned model. Release privacy protects the new data used to answer a query~\cite{dwork2014foundations,sander2024dpcap}.

This paper proposes \emph{Budgeted Release MoE}, abbreviated as \method. It extends RL-MoE with a bounded semantic interface and an explicit release controller. The four experts remain, but unrestricted captions no longer cross the release boundary. Each expert returns a small vector with a fixed range. The controller selects expert summaries and privacy budgets. Its state contains public requests and previously privatized outputs. A separate filter rejects actions that exceed the budget. A controlled renderer converts noisy summaries into text.

The work makes three proposed contributions. First, it defines a release interface that binds expert outputs to a fixed schema and a bounded-event privacy model. Second, it designs a task-aware RL controller whose decisions do not inspect fresh private features. Third, it specifies a matched evaluation of utility, repeated disclosure, freshness, and cost. The privacy argument builds on established DP mechanisms. The research question is whether learned release control improves utility over strong fixed and task-aware schedules at the same budget.

\textbf{Draft status.} The architecture and proof are specified here. No new model training or benchmark experiment has been completed for this manuscript. The numerical performance tables are labeled examples for later replacement. They do not support superiority claims.

%% file: sections/02_related_work.tex
\section{Related Work and Research Gap}
\label{sec:related}
\subsection{Visual Information Converted to Text}
Feedback-based vision-to-text refinement uses RL to improve scene descriptions~\cite{rezaei2025privacy}. RL-MoE adds four scene experts and weighted refinement~\cite{rezaei2025rl}. Video-to-Text Pedestrian Monitoring generates reports for intersection activity and safety analysis~\cite{abdelrahman2025vtpm}. EdgeVLM studies local captions for activity recognition and evaluates privacy perceptions~\cite{hiraoka2026edgevlm}. These works motivate semantic release. A change of modality alone does not establish a DP bound.

Synthesis via Private Textual Intermediaries (SPTI) uses private selection of textual intermediaries to generate a synthetic image dataset~\cite{wang2025spti}. Its privacy unit is a dataset entry. Its objective differs from answering a sequence of requests about a live traffic stream. DP-Cap instead protects image--caption training pairs during representation learning~\cite{sander2024dpcap}. It does not privatize the content of a new image supplied at inference. These distinctions prevent a misleading comparison between training privacy, dataset synthesis, and online release privacy.

Recent defenses also target direct information disclosure. Defeating Cerberus modifies internal model states to reduce responses to requests for personally identifiable information~\cite{zhang2026cerberus}. Such refusal behavior is useful, but it addresses a different question from cumulative disclosure through permitted traffic summaries.

\subsection{Private Video Analytics and Adaptive Budgets}
VideoDP releases sanitized video for later analysis~\cite{wang2020videodp}. Privid introduces event-duration privacy and bounds the influence of isolated video chunks~\cite{cangialosi2022privid}. This is the main foundation for the bounded-chunk design used here. GeoPrivd adds trajectory-level sensitivity control and a query language for geospatial traffic analytics~\cite{xu2026geoprivd}. Its slicing, bounded contributions, and query validation further narrow the novelty claim. The proposed method studies RL control over semantic-report releases. It does not claim a new definition of private video analytics or stronger privacy than these systems.

Privacy-budget allocation is also established research. Chen et al. use multi-agent RL to allocate budgets during federated learning~\cite{chen2024adaptive}. MTSP-LDP supports multiple streaming tasks under local DP~\cite{liu2025mtsp}. Privacy filters address adaptive choices of privacy parameters~\cite{rogers2016filters}. Recent work further studies adaptive budgeting~\cite{liang2026adaptive}. These methods show that adaptive allocation is not new on its own. The open design issue for this extension is how to combine task selection, semantic experts, reuse, and budget enforcement without exposing private routing signals.

\subsection{Broader AI and Transportation Context}
MINT-V2X provides a basis for studying coupled mobility and network behavior~\cite{anjum2026mint}. PEHT studies efficient traffic prediction with urban context~\cite{rezaei2026parameter}. GraphTrafficGPT organizes traffic-analysis tasks through a graph of agents~\cite{taleb2025graphtrafficgptenhancingtrafficmanagement}. These studies motivate downstream demand for compact scene information. PEFT-DML provides context for efficient multimodal perception~\cite{rezaei2026peft}. It is not a privacy mechanism. Work on imbalanced credit-card fraud detection illustrates a separate AI evaluation problem~\cite{rezaei2024credit}. It motivates reporting class-sensitive metrics for rare events here, but it provides no evidence about traffic privacy.

\begin{table*}[t]
\caption{Literature-based position of the proposed extension. Different privacy units are not interchangeable.}
\label{tab:position}
\centering\small
\begin{tabularx}{\textwidth}{L{2.4cm}YYY}
\toprule
Work & Main output or objective & Privacy scope & Relation to this study\\
\midrule
RL-MoE~\cite{rezaei2025rl} & Refined scene descriptions & Empirical privacy evaluation & Source architecture; lacks a release accountant.\\
VTPM~\cite{abdelrahman2025vtpm} & Pedestrian activity reports & Identity-related information removal & Relevant semantic-report baseline.\\
DP-Cap~\cite{sander2024dpcap} & Learned visual representations & Training image--caption records & Does not sanitize new inference inputs.\\
SPTI~\cite{wang2025spti} & Synthetic images via text & Entries in a private dataset & Batch synthesis; not a live-caption benchmark.\\
Privid~\cite{cangialosi2022privid} & Aggregate video query answers & Bounded-duration events & Foundation for isolated chunk processing.\\
GeoPrivd~\cite{xu2026geoprivd} & Geospatial traffic query answers & Trajectory-level influence control & Closely related analytics system; different release objective.\\
MTSP-LDP~\cite{liu2025mtsp} & Multi-task stream statistics & Window-based local DP & Related scheduling and reuse ideas.\\
\method\ (proposed) & Selected summaries and grounded text & At most $K$ changed chunks per protected event & Learns release choices under an enforced transcript budget.\\
\bottomrule
\end{tabularx}
\end{table*}

\subsection{Specific Gap and Testable Claim}
Table~\ref{tab:position} separates the relevant objectives. The gap is not the absence of DP, RL, or image-to-text conversion. It is the missing combination in RL-MoE of a bounded semantic query interface, privacy-safe expert selection, and cumulative release control. \method\ treats this combination as a proposed systems contribution. Its empirical value remains a hypothesis. A positive result requires an improvement over a task-aware private scheduler at matched privacy scope and budget.

%% file: sections/03_problem.tex
\section{Problem Formulation and Threat Model}
\label{sec:problem}
\subsection{Inputs, Tasks, and Released Information}
Let $V=(V_1,\ldots,V_N)$ be a video divided into nonoverlapping chunks of duration $c$. Chunk boundaries and reporting times are public. A request at release slot $t$ has a public task label $q_t$. Examples include vehicle occupancy, pedestrian activity, road-sign presence, and weather. The output $Y_t$ contains a textual summary, its reporting interval, and a freshness label. The full transcript is $\mathcal{Y}=(Y_1,\ldots,Y_T)$. It also includes observable release choices and budget metadata.

The target is aggregate monitoring. It is not identification, individual trajectory recovery, or immediate collision avoidance. A small release budget can make a report uninformative. The system must represent this outcome as uncertainty or unavailability.

\subsection{Adversary and Trust Boundary}
The observer can collect all released messages. The observer may issue adaptive public requests and combine the answers with external information. The observer knows the architecture, trained controller, schema, and privacy parameters. The camera processor and privacy filter are trusted. Raw frames, expert drafts, private caches, and noise seeds remain inside this boundary. The observer cannot inspect their execution state.

Expert processing must be isolated by chunk. Model weights remain fixed during deployment. Trackers, recurrent states, and retrieval caches are reset at chunk boundaries. Prompts and retrieval documents are fixed public resources. A raw frame must not enter a remote model service outside the trusted boundary. Report sizes and send times follow a public schedule. Failure details and private processing times are not released. These conditions are part of the model, not optional implementation advice.

\subsection{Protected Unit and Adjacency}
Two videos $V\sim_K V'$ have the same public timeline and differ in at most $K$ chunks. Changed chunks may contain arbitrary frame differences. This definition can cover a person's appearance, an identifying object, or another bounded event. It does not require a perfect identity detector. For every measurable transcript set $\mathcal{S}$, the target is
\begin{equation}
 \Pr[\mathcal{M}(V)\in\mathcal{S}]
 \leq e^{\varepsilon_\star}
 \Pr[\mathcal{M}(V')\in\mathcal{S}],
 \label{eq:dp}
\end{equation}
where $\mathcal{M}$ denotes the complete mechanism and $\varepsilon_\star$ is the total privacy budget. This is pure DP for the stated adjacency. Smaller budgets give a stronger bound.

An event of duration at most $\rho$ may straddle a chunk boundary. A conservative choice is $K=1+\lceil\rho/c\rceil$ for one appearance~\cite{cangialosi2022privid}. For $r$ separate appearances, use at least $r(1+\lceil\rho/c\rceil)$ unless a tighter bound is justified. A two-second event therefore needs $K=2$ when $c=2$ seconds. It must not be described as a four-second guarantee. The event bound applies across the declared observation horizon.

This is an event-level guarantee. It is not an unlimited person-level guarantee. Longer visits, repeated appearances, and multiple cameras require larger bounds or additional composition. User-level continual privacy is a distinct problem~\cite{dong2023continual}. Restarting an episode or a server does not erase earlier disclosure.

%% file: sections/04_method.tex
\section{Proposed Method}
\label{sec:method}
Figure~\ref{fig:architecture} shows the release boundary. RL-MoE supplies the four expert roles. \method\ changes their output interface and the role of RL. The policy controls disclosure across report slots. It does not edit unrestricted private text after disclosure.

\begin{figure*}[t]
\centering
\input{figures/architecture_tikz}
\caption{Proposed \method\ architecture. Blue solid arrows carry data. Dashed arrows carry control signals. All fresh private information reaches the output through the noise mechanism. The controller receives only public inputs and prior private releases. The offline trainer uses public or synthetic episodes and is frozen before deployment.}
\label{fig:architecture}
\end{figure*}

\subsection{Bounded Semantic Experts}
For chunk $i$ and expert $m\in\{1,2,3,4\}$, let $h_m(V_i)$ be a vector over a fixed public vocabulary. Negative entries are removed and the vector is clipped:
\begin{equation}
 v_{im}=\frac{[h_m(V_i)]_+}
 {\max\{1,\|[h_m(V_i)]_+\|_1/C_m\}}.
 \label{eq:clip}
\end{equation}
Here $[\cdot]_+$ applies the positive part to each component. The public cap $C_m>0$ ensures $\|v_{im}\|_1\leq C_m$. The boundary enforces this cap even if the expert fails. Invalid values, missing fields, or timeouts map to a fixed zero vector. Failure flags are not sent separately.

Table~\ref{tab:schema} gives the initial schema. It has twelve components in total. Vehicle and pedestrian values describe average occupancy within a chunk. They do not count unique identities or total arrivals. Sign and weather vectors contain class proportions over fixed sampled frames. No plate string, face embedding, clothing description, exact trajectory, or open text field is accepted. The absence of these fields improves data minimization. The formal bound follows from clipping and noise, not from this exclusion alone.

\begin{table}[t]
\caption{Proposed semantic schema and starting caps. Values are design choices, not tuned settings.}
\label{tab:schema}
\centering\small
\begin{tabularx}{\columnwidth}{L{1.6cm}Ycc}
\toprule
Expert & Components & $d_m$ & $C_m$\\
\midrule
Traffic & Moving; stopped vehicles & 2 & 20\\
Pedestrian & Walking; waiting pedestrians & 2 & 6\\
Signs & Stop; speed-limit; other sign; no sign & 4 & 1\\
Environment & Clear; rain; fog; other weather & 4 & 1\\
\bottomrule
\end{tabularx}
\end{table}

At report slot $t$, the public interval $\mathcal{I}_t$ contains $J_t$ chunks. The internal expert query is
\begin{equation}
 f_{tm}(V)=\frac{1}{J_t}\sum_{i\in\mathcal{I}_t}v_{im}.
 \label{eq:query}
\end{equation}
The denominator uses the public interval length. It is not a private count of visible objects. The first implementation uses fixed one-minute windows. It can reuse cached chunk vectors within the trusted boundary. Any future overlapping-window implementation must keep the same accounting discipline.

\subsection{Task-Aware Release Controller}
The controller state is $s_t=(q_t,\widetilde H_{t-1},B_t,t/T)$. The public task vector $q_t$ has four nonnegative weights that sum to one. The cache $\widetilde H_{t-1}$ stores each expert's most recent noisy vector, availability mask, and age. The value $B_t$ is the remaining budget. Missing estimates have a fixed public encoding. Fresh raw features, unnoised confidence scores, and private expert weights are excluded from the state.

The action $a_t$ selects a release amount $e_t$ and an allocation vector $w_t$. For an active expert, $\varepsilon_{tm}=e_t w_{tm}$, with $w_{tm}>0$ and $\sum_m w_{tm}=1$. Zero weight means that the expert is not queried. The finite action library contains four single-expert allocations, two equally weighted pairs (traffic--pedestrian and signs--environment), one uniform four-expert allocation, and one public task-weighted allocation. The latter uses $w_{tm}=(q_{tm}+0.05)/1.2$. Four positive choices of $e_t$ give 32 query actions. One reuse action gives 33 actions in total.

A two-layer actor samples an action. A separate mask removes any action with $e_t>B_t$. The reuse action is always available. It returns cached estimates with their original interval and current age. If no cache exists, the output is a fixed unavailable message. This prevents a hidden query when the budget is exhausted.

The policy uses a finite summary of history. Thus it is an approximate controller for a partially observed problem, rather than a claim of a fully observed Markov state. A recurrent policy is a future option, provided its inputs remain privatized.

\subsection{Noise Mechanism and Privacy Filter}
Under $V\sim_K V'$, at most $\min(K,J_t)$ vectors in~\eqref{eq:query} change. The triangle inequality gives
\begin{equation}
 \|f_{tm}(V)-f_{tm}(V')\|_1
 \leq \Delta_{tm}=\frac{2\min(K,J_t)C_m}{J_t}.
 \label{eq:sensitivity}
\end{equation}
This conservative bound permits every component of an affected chunk to change. It does not assume that a person only changes one detection. It also avoids relying on perfect tracking.

For each active expert, the mechanism releases
\begin{equation}
 z_{tm}=f_{tm}(V)+\xi_{tm},\qquad
 \xi_{tm,k}\overset{\mathrm{iid}}{\sim}
 \operatorname{Lap}(\Delta_{tm}/\varepsilon_{tm}),
 \label{eq:noise}
\end{equation}
where $k$ indexes vector components. A zero-budget expert returns no fresh vector. The ledger charges all active experts through
\begin{equation}
 e_t=\sum_m\varepsilon_{tm},\quad
 B_{t+1}=B_t-e_t,\quad B_1=\varepsilon_\star.
 \label{eq:ledger}
\end{equation}
The filter enforces $e_t\leq B_t$ for every possible transcript. It charges attempted queries before delivery, even if a later renderer fails. A persistent ledger is shared across recipients of the same protected stream. Independent recipient budgets cannot each be advertised as a bound on the joint transcript.

\subsection{Grounded Text Rendering}
The renderer projects each noisy vector onto the nonnegative $\ell_1$ ball of radius $C_m$. This produces feasible values without reading the video again. It then uses a fixed grammar to describe requested fields. A paraphraser may be added only if it reads these privatized values and public information. It must not access raw captions, hidden embeddings, or retrieval records from the current scene.

Noise should be visible in the wording. For an expert with $d_m$ components and noise scale $b_{tm}=\Delta_{tm}/\varepsilon_{tm}$, a simultaneous noise-only interval uses
\begin{equation}
 \tau_{tm}=b_{tm}\log(d_m/\alpha),
 \qquad \alpha=0.05.
 \label{eq:interval}
\end{equation}
Before projection, every component of $f_{tm}$ lies within $z_{tm,k}\pm\tau_{tm}$ with probability at least $1-\alpha$. This follows from the Laplace tail bound and a union bound. Intersect the intervals with the public feasible range. They describe noise error around the clipped query. They do not include perception error or clipping bias.

The renderer can report a projected point estimate with an ``uncertain'' flag. It does not treat that estimate as an established fact. For example, a vehicle band is ``low'' below 5, ``moderate'' from 5 to below 12, and ``high'' at 12 or above. These thresholds are public design choices. For a sum of components, the renderer adds their interval endpoints. A class is marked as supported only if its lower bound exceeds every competing upper bound. Otherwise, its argmax label is marked uncertain. Empty feasible intervals also receive the uncertain flag. Missing or stale information receives an explicit label. Every estimate must be traceable to a sanitized field or a public constant.

\subsection{Offline Learning Objective}
Training episodes use public or synthetic scenes with ground-truth task labels. The same noise mechanism runs during training. Let $\ell_{tm}\in[0,1]$ be task error for expert $m$. Occupancy error is absolute error divided by $C_m$ and clipped at one. Categorical error is zero for a correct class and one otherwise. Missing answers receive error one. Cached answers are scored against the current target. Define
\begin{equation}
 U_t=1-\sum_m q_{tm}\ell_{tm},\qquad
 R_t=U_t-\lambda_b\frac{e_t}{\varepsilon_\star}
       -\lambda_s A_t-\lambda_c L_t.
 \label{eq:reward}
\end{equation}
Here $A_t\in[0,1]$ is task-weighted age divided by the episode length. The value $L_t\in[0,1]$ is a normalized public cost estimate for the selected experts. It is measured on public development data and fixed before deployment. The privacy term encourages budget conservation. The hard filter, not the reward, enforces the privacy guarantee.

The proposed training algorithm is proximal policy optimization~\cite{schulman2017ppo}. The actor and critic each use two hidden layers of 128 units. The critic sees the same privatized state as the actor. Starting hyperparameters appear in Table~\ref{tab:config}. The policy is frozen at deployment. Online updates based on private rewards are outside the stated mechanism and would require separate privacy accounting.

\begin{algorithm}[t]
\caption{Budgeted semantic release at deployment}
\label{alg:release}
\small
\textbf{Input:} fixed experts and policy; public schema, windows, $K$, and $\varepsilon_\star$.\\
\textbf{Initialize:} $B_1=\varepsilon_\star$; empty private cache.
\begin{enumerate}
\item At each public report slot, form $s_t$ from the task, sanitized cache, budget, and time.
\item Mask unaffordable actions. Select $a_t=(e_t,w_t)$ using the frozen policy.
\item If $e_t=0$, reuse the cache or return unavailable. Keep its age and interval labels. Skip to step 7.
\item Debit $e_t$ from the persistent ledger. Process the selected chunks with isolated experts.
\item Validate and clip each vector using~\eqref{eq:clip}. Compute~\eqref{eq:query} and add the noise in~\eqref{eq:noise}.
\item Store only sanitized results in the controller cache. Build feasible values and noise intervals.
\item Render grounded text and pad the message. Send it at the scheduled time.
\end{enumerate}
\end{algorithm}

\subsection{Privacy Argument}
\begin{proposition}
Assume the adjacency and trust boundary in Section~\ref{sec:problem}. Assume exact Laplace sampling, fixed model parameters, and no unaccounted output channel. If every execution satisfies $\sum_t e_t\leq\varepsilon_\star$, then Algorithm~\ref{alg:release} satisfies~\eqref{eq:dp}.
\end{proposition}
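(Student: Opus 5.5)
The proof treats Algorithm~\ref{alg:release} as an adaptive composition of Laplace mechanisms with a privacy filter and then bounds the likelihood ratio of the whole transcript. By the trust-boundary assumptions and the no-unaccounted-channel hypothesis, the observable transcript $\mathcal{Y}$ is a deterministic, public post-processing of the public inputs and the sequence of noisy vectors $\{z_{tm}\}$. The relevant public inputs are the requests $q_t$, the fixed policy's internal randomness, the schedule and the padding. This holds because the state $s_t$ contains only public quantities and earlier sanitized outputs, the action mask depends only on $B_t$, and the renderer and cache read only privatized values. Failures map to a zero vector inside the boundary, and sizes and send times are public, so no side channel depends on the raw video. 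It therefore suffices to prove that the joint law of the noisy releases together with the selected actions satisfies~\eqref{eq:dp}. Post-processing invariance of pure DP then carries the bound to $\mathcal{Y}$.

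Fix adjacent $V\sim_K V'$ and a realized history. The transcript density factorizes over slots. The policy's action kernel at slot $t$ depends only on $s_t$, which is a function of earlier outputs, so it is identical under $V$ and $V'$ and cancels in the ratio. For each active expert, the released vector $z_{tm}$ has density proportional to $\exp(-\varepsilon_{tm}\|z-f_{tm}(V)\|_1/\Delta_{tm})$. Equation~\eqref{eq:sensitivity} holds because at most $\min(K,J_t)$ clipped chunk vectors in the interval change, each by at most $2C_m$ in $\ell_1$ after~\eqref{eq:clip}. The triangle inequality then gives a per-factor ratio of at most $e^{\varepsilon_{tm}}$. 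The experts' noise draws are independent, so the slot ratio is at most $e^{e_t}$ by~\eqref{eq:ledger}. Reuse slots have $e_t=0$ and contribute a ratio of $1$. Because $\varepsilon_{tm}$ and the intervals $\mathcal{I}_t$ are functions of public history, the sensitivity bound applies conditionally at every slot. This is where chunk isolation and fixed weights matter: $v_{im}$ depends only on $V_i$, so the count of changed chunks is valid.

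Multiplying the slot ratios gives a pointwise density ratio of at most $\exp(\sum_t e_t)$ along every execution path. The hypothesis that every execution satisfies $\sum_t e_t\le\varepsilon_\star$ then gives a pathwise bound of $e^{\varepsilon_\star}$. Integrating over any measurable $\mathcal{S}$ yields~\eqref{eq:dp}.

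The main obstacle is that the budget sequence is chosen adaptively, so standard fixed-parameter composition does not apply directly. A pathwise argument is needed, in the style of a privacy filter~\cite{rogers2016filters}. The reason it works is that the filter guarantees the bound on \emph{every} path, not just in expectation, and the pure-DP density ratio composes multiplicatively along each path. The proof must therefore check measurability and the existence of densities for the mixed discrete–continuous transcript. To handle this, I would condition on the discrete action sequence and write densities with respect to a product of counting and Lebesgue measure. Handling adaptivity this way requires no advanced composition.
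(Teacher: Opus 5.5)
Your proposal is correct and follows essentially the same route as the paper. You build an augmented transcript of actions and noisy vectors, and the action kernels cancel because they depend only on public history. The per-expert Laplace ratios are bounded through the clipped sensitivity~\eqref{eq:sensitivity}, and the slot ratios multiply pathwise to $\exp(\sum_t e_t)\le e^{\varepsilon_\star}$ under the filter. Post-processing then carries the bound to the visible transcript. Your explicit handling of densities on the mixed discrete--continuous transcript space adds rigor that the paper's proof leaves implicit.
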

\textit{Proof.} Consider an augmented transcript that contains every queried noisy vector and every action. Fix its history. The controller's next-action distribution depends only on that history and public inputs. It therefore has the same conditional law for neighboring videos. For each selected expert, isolated chunk processing and clipping give~\eqref{eq:sensitivity}. The Laplace mechanism is then $\varepsilon_{tm}$-DP~\cite{dwork2014foundations}. Sequential composition gives $e_t$ for the slot. Conditional likelihood ratios multiply to at most $\exp(\sum_t e_t)$. The filter bounds this quantity by $\exp(\varepsilon_\star)$ on every path. The externally visible transcript is post-processing of this augmented transcript. This includes reuse, projection, intervals, rendering, and action metadata. They add no new privacy cost. The argument also permits adaptive public requests. This proves the claim.\hfill$\square$

The proof uses standard composition, not a new accounting theorem. More advanced accounting may improve utility but is not required here. An implementation must use a vetted finite-precision noise sampler and audit the stated isolation boundary before claiming the mathematical guarantee in deployment.

\subsection{Cost and Design Trade-off}
With cached per-chunk vectors, query aggregation costs $O(J_t\sum_m d_m)$ for selected experts. The small controller and renderer add little work relative to visual inference, but this must be measured. The design does not assume that four full vision-language models can run in real time on an edge device. A shared frozen backbone with role-specific heads is one implementation option. Equal backbones and sampling rates are required in the main comparison.

%% file: figures/architecture_tikz.tex
\begin{tikzpicture}[
 font=\sffamily\small,
 box/.style={draw=teal,line width=.7pt,rounded corners=2pt,fill=pale,align=center,minimum height=1.05cm,text width=2.7cm,inner sep=5pt},
 control/.style={box,draw=amber,fill=orange!7},
 arrow/.style={-{Latex[length=2.1mm]},thick,draw=teal},
 ctrl/.style={-{Latex[length=2.1mm]},thick,dashed,draw=amber}]
\node[box,text width=2.25cm] (video) at (0,0) {Private video\\Fixed, isolated chunks};
\node[box,text width=3.0cm,minimum height=2.4cm] (experts) at (3.65,0) {Four scene experts\\[3pt]Traffic $\cdot$ Pedestrians\\Signs $\cdot$ Environment\\[3pt]Fixed schema + clipping};
\node[box,text width=2.4cm] (query) at (7.35,0) {Selected window\\aggregates $f_{tm}$};
\node[box,text width=2.6cm] (noise) at (11.1,0) {Calibrated noise\\Privacy boundary};
\node[box,text width=2.6cm] (render) at (11.1,-2.6) {Grounded renderer\\Text + age + uncertainty};
\node[box,text width=2.7cm] (cache) at (7.35,-2.6) {Sanitized cache\\Prior private releases};
\node[control,text width=2.9cm] (actor) at (3.65,-2.6) {Frozen RL controller\\Expert choice + budget};
\node[control,text width=2.4cm] (public) at (0,-2.6) {Public task\\Clock + fixed costs};
\node[control,text width=2.5cm] (ledger) at (7.35,2.0) {Hard budget filter\\Persistent ledger $B_t$};
\node[control,text width=2.9cm] (train) at (3.65,-4.7) {Offline public / synthetic\\training episodes};
\node[box,text width=2.6cm,fill=white] (out) at (11.1,-4.7) {Released report\\Downstream analytics};
\draw[arrow] (video) -- (experts);
\draw[arrow] (experts) -- (query);
\draw[arrow] (query) -- (noise);
\draw[arrow] (noise) -- (render);
\draw[arrow] (noise.south west) -- (cache.north east);
\draw[arrow] (cache) -- (actor);
\draw[arrow] (render) -- (out);
\draw[ctrl] (public) -- (actor);
\draw[ctrl] (actor.north) -- (experts.south);
\draw[{Latex[length=2.1mm]}-{Latex[length=2.1mm]},thick,dashed,draw=amber] (actor.north east) -- (5.5,-1.95) -- (5.5,2.0) -- (ledger.west);
\draw[ctrl] (ledger.east) -| (noise.north);
\draw[ctrl] (train) -- (actor);
\node[align=left,text width=4.5cm,font=\sffamily\footnotesize,text=teal] at (0.8,2.0) {Trusted processing:\\raw frames and expert drafts stay local};
\node[align=center,text width=3.6cm,font=\sffamily\footnotesize,text=amber] at (7.35,-4.7) {Reuse reads the cache.\\It consumes no new budget.};
\end{tikzpicture}

%% file: sections/05_evaluation.tex
\section{Evaluation Protocol}
\label{sec:evaluation}
\subsection{Research Questions}
The evaluation will test four questions. Does learned release control improve task utility at a fixed privacy budget? Does reuse preserve useful information when fresh releases are costly? How much utility is lost as the protected event bound increases? Do empirical attacks expose weaknesses outside the formal model? These questions must be answered with measured data before the proposed method can be judged effective.

\subsection{Datasets and Annotations}
Table~\ref{tab:data} assigns a specific role to each proposed data source. CityFlow provides fixed-camera traffic video and vehicle identities for a separate re-identification audit~\cite{tang2019cityflow}. BDD100K provides varied driving conditions for perception and domain-shift tests~\cite{yu2020bdd100k}. Its driving clips must not be concatenated into a fictitious stationary-camera stream. Use its supported clip lengths and compute $J_t$ accordingly. MINT-V2X can support controller development and downstream simulation~\cite{anjum2026mint}. It cannot validate visual privacy or supply face-attack labels.

Use sequence-disjoint development and test sets. Keep all observations of a labeled identity in one split for attack-model training. Preserve official benchmark splits where they support the stated task. If a split lacks labels needed here, report a new labeled subset instead of implying that the labels are native to the dataset. Within the development partition, allocate 80\% of sequences to training and 20\% to validation. Determine the sequence list before model selection. No test sequence may influence clipping caps, reward weights, thresholds, or controller training.

Two annotators will label sampled frames for vehicle and pedestrian counts, motion categories, sign classes, and weather. Sampling uses a fixed two frames per second. Resolve disagreements through adjudication. Publish the annotation guide, inter-annotator agreement, class counts, and interval definitions. Record exact numbers of sequences, frames, windows, cameras, and identities after this work is complete. These counts are not yet available. Report both the expert perception error and the additional release error.

\begin{table*}[t]
\caption{Planned data use. Subset sizes, annotation counts, and split manifests remain to be established.}
\label{tab:data}
\centering\small
\begin{tabularx}{\textwidth}{L{2.2cm}YYY}
\toprule
Source & Intended role & Labels needed & Split and scope control\\
\midrule
CityFlow~\cite{tang2019cityflow} & Main fixed-camera traffic stream; vehicle attack audit & Occupancy and motion; existing vehicle identities where available & Sequence and identity separation; report camera coverage.\\
BDD100K~\cite{yu2020bdd100k} & Perception and domain-shift evaluation & Pedestrian states and coarse signs; available scene labels & Keep official sequence separation; use actual clip lengths.\\
MINT-V2X~\cite{anjum2026mint} & Public simulation for controller stress tests & Task targets derived from documented simulation fields & Split complete runs or contiguous periods; no visual-privacy claim.\\
New annotated subset & Joint four-expert evaluation & All twelve schema components and task labels & Publish selected sequences and annotations; no invented sample size.\\
\bottomrule
\end{tabularx}
\end{table*}

\subsection{Baselines and Fair Comparison}
Re-run all methods on the same subsets. Include original-image task models as a utility reference, a single-expert caption model, the published RL-MoE design, and an RL-MoE variant with explicit identity-field removal. The latter tests whether a simple schema restriction explains the benefit. VTPM is relevant for the pedestrian-report task only~\cite{abdelrahman2025vtpm}. Do not invent a four-expert VTPM result.

Three private schedules are essential. \emph{Uniform-DP} divides the total budget equally across report slots and experts. \emph{Task-DP} divides it equally across slots but uses the public task-weighted allocation. \emph{Periodic-DP} reports every second slot, spends twice the per-report budget, and reuses the cache between releases. A Privid-style bounded-query implementation provides a further comparison on supported aggregate tasks~\cite{cangialosi2022privid}. Identify it as an adaptation unless the original implementation and settings are reproduced exactly.

All private variants must share the same $K$, horizon, windows, schema, caps, noise mechanism, and total budget. All model variants must use matched perception backbones and frame rates where their design permits. Report unavoidable differences. Include GeoPrivd for compatible aggregate queries once its implementation and privacy unit are matched~\cite{xu2026geoprivd}. Otherwise report it as related work, not a measured baseline. The nonprivate RL-MoE result has no comparable formal budget. Its published attack percentages must not be copied into the new tables as if they came from this protocol. SPTI and DP-Cap belong in the literature comparison; their original tasks do not support a direct live-stream ranking.

\subsection{Utility, Privacy, and Cost Metrics}
Measure vehicle and pedestrian occupancy with mean absolute error (MAE), in objects per sampled frame averaged over a report window. Also report unclipped ground-truth counts so that clipping bias is visible. Measure sign and weather classes with macro-averaged F1. Score every reported point estimate, including those marked uncertain. A missing class counts as a missed class. Report estimate coverage, uncertainty rate, and selective accuracy together. For count metrics, use a common public prior when no estimate exists, fixed from development data. Do not drop unavailable intervals. The policy reward still treats an unavailable answer as error one.

Primary privacy reporting consists of the adjacency, $K$, $\varepsilon_\star$, horizon, and ledger trace. Empirical auditing uses a balanced event-presence inference task: an attacker decides whether a candidate bounded event contributed to a transcript. Report area under the receiver operating characteristic curve (AUC) and true-positive rate at a 1\% false-positive rate. Train attackers on separate shadow sequences and freeze thresholds before testing. Evaluate both a logistic baseline and a stronger sequence model. An attack with AUC near 0.5 is weak evidence about that attack, not proof of privacy.

For vehicle re-identification, report rank-1 accuracy and mean average precision only on a dataset with compatible identity labels. Evaluate one report and complete report sequences. State the gallery size and candidate prior. Membership inference on the training set is a separate optional audit. It is not a substitute for inference-time disclosure testing. Reconstruction similarity is secondary because low pixel similarity does not establish identity protection.

Report median and 95th-percentile processing latency, peak device memory, transmitted bytes, unavailable-report rate, and summary age. Separate video collection delay from computation time. A one-minute report window already introduces a delay that cannot be hidden by a fast renderer. Use five random seeds. Report seed means and variation, plus paired confidence intervals from bootstrapping complete sequences. Treat correlated frames as part of one sampling unit.

\begin{table}[t]
\caption{Proposed starting configuration. All choices require validation and complete implementation logging.}
\label{tab:config}
\centering\small
\begin{tabularx}{\columnwidth}{L{3.15cm}Y}
\toprule
Parameter & Starting value\\
\midrule
Chunk duration; fixed frame sampling & 2 s; 2 frames/s\\
Main window; report cadence & 60 s; 60 s\\
Benchmark episode & 10 min; 10 report slots\\
Protected changed chunks $K$ & 2; test 5 and 10\\
Episode budget $\varepsilon_\star$ & 4; test 2 and 8\\
Release amount $e_t$ & 0, 0.1, 0.2, 0.4, 0.8\\
Actor and critic & Separate $128$--$128$ ReLU layers\\
Optimizer; learning rate & Adam; $3\times10^{-4}$\\
Discount; PPO clip parameter & 0.99; 0.2\\
GAE parameter; entropy coefficient & 0.95; 0.01\\
Rollout; minibatch; update epochs & 128 episodes; 256 steps; 10\\
Training ceiling & 100,000 public/synthetic episodes\\
Reward weights $(\lambda_b,\lambda_s,\lambda_c)$ & $(0.05,0.10,0.02)$\\
Validation checkpoint selection & Highest utility within the budget\\
Seeds & 11, 22, 33, 44, 55\\
\bottomrule
\end{tabularx}
\end{table}

Generalized advantage estimation (GAE) supplies the proposed PPO advantage estimate. The final experiment log must name each perception checkpoint, tokenizer, prompt, software version, device, precision, and seed. Hardware is not specified as an executed configuration because no run has occurred. A development sweep should vary the reward weights by factors of 0.5, 1, and 2. Select the final settings using validation sequences only.

\subsection{Ablations and Failure Tests}
Remove learned selection, reuse, or task conditioning one component at a time. Keep the formal mechanism fixed. Add a deliberately unsafe raw-feature gate as a negative control. Mark it as outside the guarantee. Removing the filter is another negative control; such a variant cannot retain the same reported total budget.

Test empty scenes, dense scenes, nighttime clips, missing experts, and abrupt task changes. Increase $K$ and the number of repeated queries. Test event appearances that cross window boundaries. Audit the full system for raw-caption logging, cross-chunk caches, private timing, and restart-induced budget resets. Stress tests should measure utility loss even when the formal budget remains valid.

%% file: sections/06_results.tex
\section{Illustrative Results and Reporting Plan}
\label{sec:results}
\textbf{All performance numbers in Tables~\ref{tab:mainresults}--\ref{tab:budget} are hand-authored examples.} They were not produced by model runs, simulations, or measurements. No confidence intervals or significance claims are supplied for these values. Their only purpose is to show how actual results can be reported. The analytical values in Table~\ref{tab:noise} are computed from~\eqref{eq:sensitivity} and~\eqref{eq:noise}; they are not experimental results.

\subsection{Matched Main Comparison}
Table~\ref{tab:mainresults} shows a possible reporting layout. The main test compares \method\ with Task-DP and Periodic-DP at equal privacy scope. The raw-image reference gives utility context, but its release interface differs. The attack column refers to event-presence inference from the declared released object. Real results must state that object for each baseline. A successful study may show improvement for some tasks and a loss for others. It must report that outcome rather than force a uniform ranking.

\input{tables/main_results}

\subsection{Component Effects}
Table~\ref{tab:ablation} illustrates the ablation format. The unsafe gate is included to expose a common design error: raw private features can influence an observable release decision before noise is added. Its seemingly good utility would not justify a DP label. For the valid variants, test whether the gain comes from learned scheduling, task information, or simple cache reuse. Report answer coverage alongside utility to detect selective omission.

\input{tables/ablation_results}

\subsection{Budget and Event-Duration Sensitivity}
Table~\ref{tab:budget} illustrates the planned privacy--utility analysis. Varying $K$ changes the protected event model. Varying $\varepsilon_\star$ changes the allowed privacy loss within that model. These are different experimental axes. Compare methods at identical values of both. A small budget and a large $K$ may leave almost every report uncertain. Such a result would identify a practical limit of the design.

\input{tables/budget_results}

\subsection{Analytical Noise Check}
The starting configuration uses $J_t=30$ and $K=2$. For the pedestrian expert, $C_m=6$, so $\Delta_{tm}=0.8$. If a slot spends $e_t=0.4$ uniformly across four experts, the pedestrian share is $0.1$ and the Laplace scale is $8.0$. Spending the same slot budget only on pedestrians reduces the scale to $2.0$. This calculation motivates selective release, but it does not establish higher end-to-end utility. Other expert summaries become older or unavailable.

\begin{table}[t]
\caption{Analytical noise scales for the pedestrian vector. $J_t=30$, $C_m=6$, and $e_t=0.4$. These are formula-derived values.}
\label{tab:noise}
\centering\small
\begin{tabular}{lrrrr}
\toprule
Allocation & $K$ & $\Delta_{tm}$ & $\varepsilon_{tm}$ & $b_{tm}$\\
\midrule
Uniform four experts & 2 & 0.8 & 0.1 & 8.0\\
Pedestrian only & 2 & 0.8 & 0.4 & 2.0\\
Uniform four experts & 5 & 2.0 & 0.1 & 20.0\\
Pedestrian only & 5 & 2.0 & 0.4 & 5.0\\
\bottomrule
\end{tabular}
\end{table}

The large scales in Table~\ref{tab:noise} are a substantive concern. Formal privacy can be costly for low-count scenes. Clipping and projection can reduce absolute error while introducing bias. Report that bias explicitly. Longer windows reduce average-query sensitivity, but they also delay information. Neither trade-off should be hidden by a single combined score.

\subsection{Qualitative Output Audit}
An illustrative report might read: ``Vehicle occupancy: uncertain. Weather: rain, based on the last released interval. Pedestrian summary: unavailable.'' This is a constructed example, not a captured model output. For the actual study, show representative success and failure cases with matched private and nonprivate summaries. Every released noun, count, class, and time label should map to a sanitized field or a public constant. Report unsupported claims even if the report is fluent.

%% file: tables/main_results.tex
\begin{table*}[t]
\caption{Illustrative main-comparison values; replace every performance cell with measurements. Private rows use the planned $K=2$, ten-minute horizon, and $\varepsilon_\star=4$.}
\label{tab:mainresults}
\centering\small
\begin{tabularx}{\textwidth}{Yrrrrrr}
\toprule
Method & Vehicle MAE $\downarrow$ & Ped. MAE $\downarrow$ & Sign macro-F1 $\uparrow$ & Event AUC $\downarrow$ & Coverage (\%) $\uparrow$ & P95 time (s) $\downarrow$\\
\midrule
Original-image release reference & 0.80 & 0.35 & 0.91 & 0.90 & 100 & 1.2\\
RL-MoE, re-run protocol & 1.40 & 0.70 & 0.85 & 0.79 & 100 & 4.8\\
RL-MoE + field removal & 1.70 & 0.90 & 0.82 & 0.72 & 100 & 4.9\\
Uniform-DP & 7.20 & 2.80 & 0.44 & 0.55 & 100 & 3.1\\
Task-DP & 6.10 & 2.40 & 0.51 & 0.56 & 100 & 3.2\\
Periodic-DP + reuse & 5.80 & 2.20 & 0.55 & 0.57 & 90 & 3.0\\
\method\ (proposed) & 5.20 & 2.00 & 0.56 & 0.56 & 80 & 3.4\\
\bottomrule
\end{tabularx}
\par\smallskip\footnotesize\illustrative\ Coverage is the fraction of requested fields with a point estimate, including marked uncertain or stale estimates. P95 time is processing time after the reporting interval closes; it excludes the 60 s collection window. Every numeric entry in this table is illustrative. AUC has a chance level of 0.5. The first three rows have no formal release-privacy claim.
\end{table*}

%% file: tables/ablation_results.tex
\begin{table}[t]
\caption{Illustrative ablation values at the planned total budget of 4. ``Valid'' means the stated design retains the proof assumptions; it is not an implementation certification.}
\label{tab:ablation}
\centering\small
\begin{tabularx}{\columnwidth}{Yrrc}
\toprule
Variant & Utility $\uparrow$ & Age (s) $\downarrow$ & Valid\\
\midrule
\method\ (proposed) & 0.61 & 42 & Yes\\
Fixed task schedule & 0.57 & 0 & Yes\\
No reuse (return missing) & 0.49 & 0 & Yes\\
No task conditioning & 0.53 & 48 & Yes\\
Raw-feature gate (unsafe) & 0.68 & 30 & No\\
\bottomrule
\end{tabularx}
\par\smallskip\footnotesize\illustrative\ Utility is $U_t$ from~\eqref{eq:reward}, averaged over all slots. Age is averaged over available estimates; missing answers must be reported separately. Zero age does not imply full answer coverage.
\end{table}

%% file: tables/budget_results.tex
\begin{table}[t]
\caption{Illustrative budget and event-bound sensitivity for \method. All performance cells must be replaced.}
\label{tab:budget}
\centering\small
\begin{tabular}{rrrrr}
\toprule
$K$ & $\varepsilon_\star$ & Utility $\uparrow$ & Event AUC $\downarrow$ & Coverage (\%) $\uparrow$\\
\midrule
2 & 2 & 0.48 & 0.53 & 65\\
2 & 4 & 0.61 & 0.56 & 80\\
2 & 8 & 0.69 & 0.61 & 95\\
5 & 4 & 0.43 & 0.54 & 72\\
10 & 4 & 0.30 & 0.52 & 60\\
\bottomrule
\end{tabular}
\par\smallskip\footnotesize\illustrative\ These values do not predict an empirical curve. A larger $K$ protects a broader event class under the same budget.
\end{table}

%% file: sections/07_discussion.tex
\balance
\section{Discussion and Limitations}
\label{sec:discussion}
\subsection{What the Extension Adds}
RL-MoE optimizes the content of an image-derived description. \method\ instead optimizes which bounded expert summaries are released over time. It retains the expert roles but replaces the raw-image gate at the disclosure boundary. A separate accountant makes the budget auditable. The formal contribution is a scoped application of established DP tools. The untested empirical contribution is learned allocation across tasks and report times.

The four expert roles remain manually defined. Automatic expert discovery and richer semantic schemas are outside the current scope. This choice isolates the release-control question. Adding more fields increases the output dimension and may reduce utility. Changing an expert is allowed only if its chunk isolation and enforced output bound remain valid.

\subsection{Limits of the Guarantee}
The trusted processor still handles sensitive video. DP does not protect a compromised camera or an exposed raw-data cache. The result also does not establish fairness, legal compliance, or factual accuracy. It bounds changes in the released transcript for events that satisfy the declared $K$-chunk adjacency.

The main benchmark setting protects short events. It is not sufficient for many real pedestrian visits or long vehicle stops. Increasing $K$ may sharply reduce utility. Public report windows and a camera identifier can remain visible. Multiple cameras require an event bound and a ledger for their joint transcript. Longer observations cannot receive a fresh unaccounted budget merely because a reporting period ended.

The privacy proof does not require accurate perception, but the utility evaluation does. Counting errors, class bias, and clipping may produce wrong reports. A finite action space also limits control. Distribution shifts may reduce the value of a policy trained on simulation. A task-aware deterministic schedule may match or outperform the learned controller. That is a meaningful possible outcome.

\subsection{Reproducibility and Completion Requirements}
The final study should release source code, fixed prompts, schema definitions, action masks, configurations, and aggregate experiment logs where data terms permit. It should include a test that no fresh private input reaches the controller. It should also document persistent accounting across failures and restarts. The current package supplies the manuscript, editable diagram, bibliography, and result-table sources. It does not contain a trained \method\ model or an executed benchmark.

Before submission, replace all illustrative performance values with measured values and confidence intervals. Add the exact dataset manifests, backbone versions, hardware, attack models, and annotation counts. Revise the abstract and conclusion to match the measured findings. Keep the privacy assumptions and negative results visible. A final venue-specific check is still needed once the target journal is chosen.

\section{Conclusion}
This paper proposed \method\ as a budgeted-release extension of RL-MoE. It combines bounded semantic experts, a task-aware RL controller, and an enforced privacy ledger. Noise is applied before text rendering. Under the stated isolation assumptions, the full transcript has a formal privacy bound for events that affect at most $K$ chunks. The evaluation plan separates this guarantee from attack evidence and task utility. Empirical benefits remain to be tested. Future work will study larger event bounds and coordinated accounting across cameras without relying on persistent identity tracking.